\newtheorem{theorem}{Theorem}
\newcommand{\E}[2][]{\mathbb{E}_{#1}\mathopen{}\left[#2\mathopen{}\right]}
\newcommand{\Ec}[3][]{\mathbb{E}_{#1}\mathopen{}\left[#2 \middle| #3\mathopen{}\right]}
\newcommand{\mis}{\mathrm{mis}}
\newcommand{\obs}{\mathrm{obs}}
\newcommand{\normal}[2]{\mathrm{N}\left(#1, #2\right)}
\title{Identification and Estimation of Heterogeneous Treatment Effects under Non-compliance or Non-ignorable assignment}
\author[1,2]{Keisuke Takahata}
\author[1,2]{Takahiro Hoshino\thanks{hoshino@econ.keio.ac.jp}}
\affil[1]{Keio University, Japan}
\affil[2]{RIKEN Center for Advanced Intelligence Project, Japan}
\date{}
\begin{document}

\setlength{\abovedisplayskip}{4pt}
\setlength{\belowdisplayskip}{4pt}
\setlength{\baselineskip}{7.1mm}

\maketitle
\thispagestyle{empty}
\begin{abstract}
We provide sufficient conditions for the identification of the heterogeneous treatment effects, defined as the conditional expectation for the differences of potential outcomes given the untreated outcome, under the nonignorable treatment condition and availability of the information on the marginal distribution of the untreated outcome. These functions are useful both to identify the average treatment effects (ATE) and to determine the treatment assignment policy. 
The identification holds in the following two general setups prevalent in applied studies: (i) a randomized controlled trial with one-sided noncompliance and (ii) an observational study with nonignorable assignment with the information on the marginal distribution of the untreated outcome or its sample moments. To handle the setup with many integrals and missing values, we propose a (quasi-)Bayesian estimation method for HTE and ATE and examine its properties through simulation studies. We also apply the proposed method to the dataset obtained by the National Job Training Partnership Act Study.
\end{abstract}

\textbf{Keyword}: \textit{Causal inference; Identifiability; Bayesian inference; Auxiliary Information; Integral equation; Randomized consent design
}


\newpage

\pagestyle{plain}
\setcounter{page}{1}

\section{Introduction}

\subsection{Heterogeneous treatment effects}

Both for randomized controlled trials (RCTs) and observational studies, the average treatment effect (ATE), average treatment effect on the treated (ATT), and average treatment effect on the untreated (ATU) are effects of interest (Rubin, 1974). However, even for an RCT, without perfect compliance with the assigned treatments, only the local average treatment effect (LATE), which is different from ATE, is consistently estimated under the additional conditions (Angrist, Imbens and Rubin, 1996). In an observational study, strong ignorability condition, which requires an assignment to be independent of the potential outcomes given the covariates, is known to play a significant role in the identification of those effects (Rosenbaum and Rubin, 1983).

Moreover, at times, researchers want to identify more individualized or heterogeneous causal effects, while, as their names suggest, ATE, ATT, and ATU are averaged effects over a population or a subset of a population. Estimation of various kinds of heterogeneous effects has received great attention in recent years, particularly in marketing and medicine where personalized treatments are effective, or in policy-making, where the cost (and thus the number of the targets) for special interventions such as job training programmes is limited (Kitagawa and Tetenov, 2018). 

Previously proposed heterogeneous causal effects are functions of the observable variables (e.g. Wager and Athey, 2018); however in this paper we discuss the identification and inference of heterogeneous treatment effects (HTEs), which we define in this paper as 
\begin{align*}
\mathrm{HTE}(y_{0}) = \Ec{y_{1}-y_{0}}{y_{0}}, 
\end{align*}
where $y_{1}\in \mathbb{R}$ and $y_{0}\in \mathbb{R}$ are the potential outcome variable under the (special) treatment condition (with higher cost) and the (default) control condition respectively, and $x \in \mathbb{R}^{d}$ is a $d$-dimensional covariate vector. HTE is a function of $y_{0}$, which can indicate how much effect the unit whose outcome is $y_{0}$ under the untreated condition, would get if the unit is assigned to the treatment condition.

It is often useful to estimate this function because in many real applications special treatment provided with high monetary cost or mental burden is not always effective. For example, it is well known that monetary reward for enhancing physical activity can promote exercise in individuals without good exercise habits but may undermine the intrinsic motivation for individuals with good exercise habits (e.g., Deci, 1971; Charness and Gneezy, 2009), thus providing opposite effects on physical activity in different types of individuals. As will be mentioned and re-analysed in Section 5, the effect of a well-known job training provided under the Job Training Partnership Act (JTPA) depends on trainees' demographic variables (Bloom et al., 1997; Abadie et al., 2002). Moreover, as will be mentioned later, the causal estimands usually considered in the previous studies, ATE, ATT and ATU, are expressed as the expectations of the HTE function.

\subsection{Identification problem and the main result}

Although HTEs may have implications that attract researchers, the identification of HTEs is not trivial owing to the dependence of the potential outcome variable: we need to identify the density of $y_{1}$ given $y_{0}$, $p(y_{1}|y_{0})$, but $y_{1}$ and $y_{0}$ are never observed simultaneously. Then, we need to deal with the missing mechanism which is nonignorable (or missing not at random, Little and Rubin, 2002). Under nonignorable missingness, indentifiability is not assured even under full parametric assumptions (e.g. Miao et al., 2016; Cui et al., 2017). Therefore, additional conditions are needed for identification. 

To this end, we introduce the following two assumptions in this paper.
First, we consider relaxing strong ignorability condition as
\begin{align*}
p(z|y_{1}, y_{0},x) = p(z|y_{0},x), 
\end{align*}
where $z \in \{0,1\}$ is an indicator which is $z=1$ when $y_{1}$ is observed (i.e. when assigned and complying with the treatment condition). We refer to this assumption as \textit{weak ignorability}. This assumption is justifiable for the following two reasons. First, it is always weaker than strong ignorability assumption. Second, since $z$ precedes the outcome in causal inference, it is natural to assume that assignment of or compliance with the treatment will be influenced by the default value of the outcome, $y_{0}$, rather than by the outcome under some special treatment, $y_{1}$ (Hoshino, 2013).
Although it is not straightforward to observe how weak ignorability works in the identification of HTE, the details are described in Section 2. Second, we assume that the information on the distribution of the untreated outcome $p(y_{0})$ or its moments is available. We present two practical examples where the second assumption is satisfied in the next subsection.

Our main result is that, under weak ignorability and the availability of the information on the distribution of the untreated outcome $y_{0}$, it is sufficient for the identification of $p(y_{1}|y_{0},x)$ and the HTE that the extended propensity score (described in section 2.1, Eq.~(\ref{eq:hirano_result})) (i) is specified as the logistic regression, (ii) is specified so that certain additivity holds between $y_{0}$ and $x$, and (iii) has the linear term of $y_{0}$ in the regression function. This condition also assures the identification of causal estimands such as ATT, ATE, and ATU, while under the traditional setup with one-sided non-compliance only ATT is identified (see Section 2 for detail).

\subsection{Two setups considered in this paper}

As typical examples where $p(y_{0})$ can be obtained, we consider the following two setups prevalent in applied studies (see also Figure~\ref{fig:fig1}): (a) randomized controlled trials which are conducted with imperfect compliance, called ``one-sided noncompliance''  (Imbens and Rubin, 2015), in that for the control group all the participants comply with the control condition while for the treatment group not all the participants comply with their treatment, or individuals are allowed to choose their treatment, and (b) observational studies in which external information on the population or a random sample of the population is available.

Setup (a) is sometimes called a ``randomized consent design'' (Zelen, 1979, 1990). In this setup, $r$ is an assignment indicator which is $r=1$ if the target unit is assigned to the treatment condition, while $z$ is an treatment (or more strictly, compliance to treatment) indicator which is $z=1$ when the unit complies with assignment to the treatment and $z=0$ when it does not. Note that $y_{1}$ and $z$ are missing for $r=0$.
In setup (b), $r=1$ if the units belong to an observational study in which $z$ is a treatment indicator not determined by the researchers.
For examples of setup (b), there is a survival-time distribution for the control condition in a population, often available in medical research, or an income distribution estimated using census data in economics. In marketing, when a company tries to evaluate the effect of a new marketing promotion for specific customers, the company can choose a random sample ($r=1$) and apply the promotion to targeted customers ($z=1$) while the distribution of various variables for customer ($r=0$) is known.

In this paper, we assume that $r$ is independent of potential outcomes $y_{1}$ and $y_{0}$ (i.e. random assignment for case (a) and random sampling for case (b)); however, the results obtained in this paper can be easily generalized to the case when $p(r|y_{1},y_{0},x)=p(r|x)$.

\begin{figure}[htbp]
	\centering
	\subcaptionbox{RCT with one-sided non-compliance\label{fig:fig1a}}
	{\includegraphics[width=0.45\linewidth]{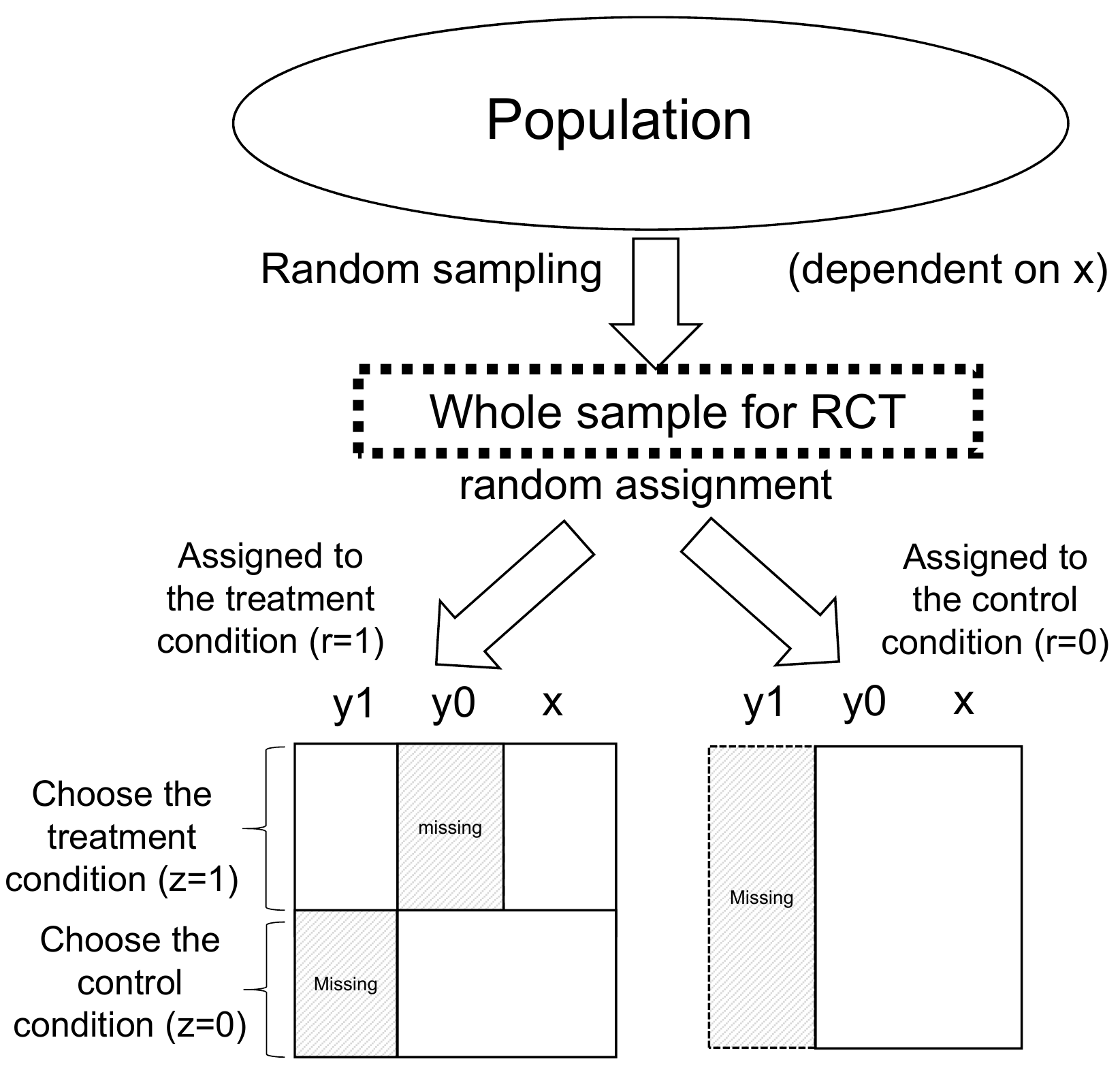}} 
	\subcaptionbox{Observational study with population information\label{fig:fig1b}}
	{\includegraphics[width=0.45\linewidth]{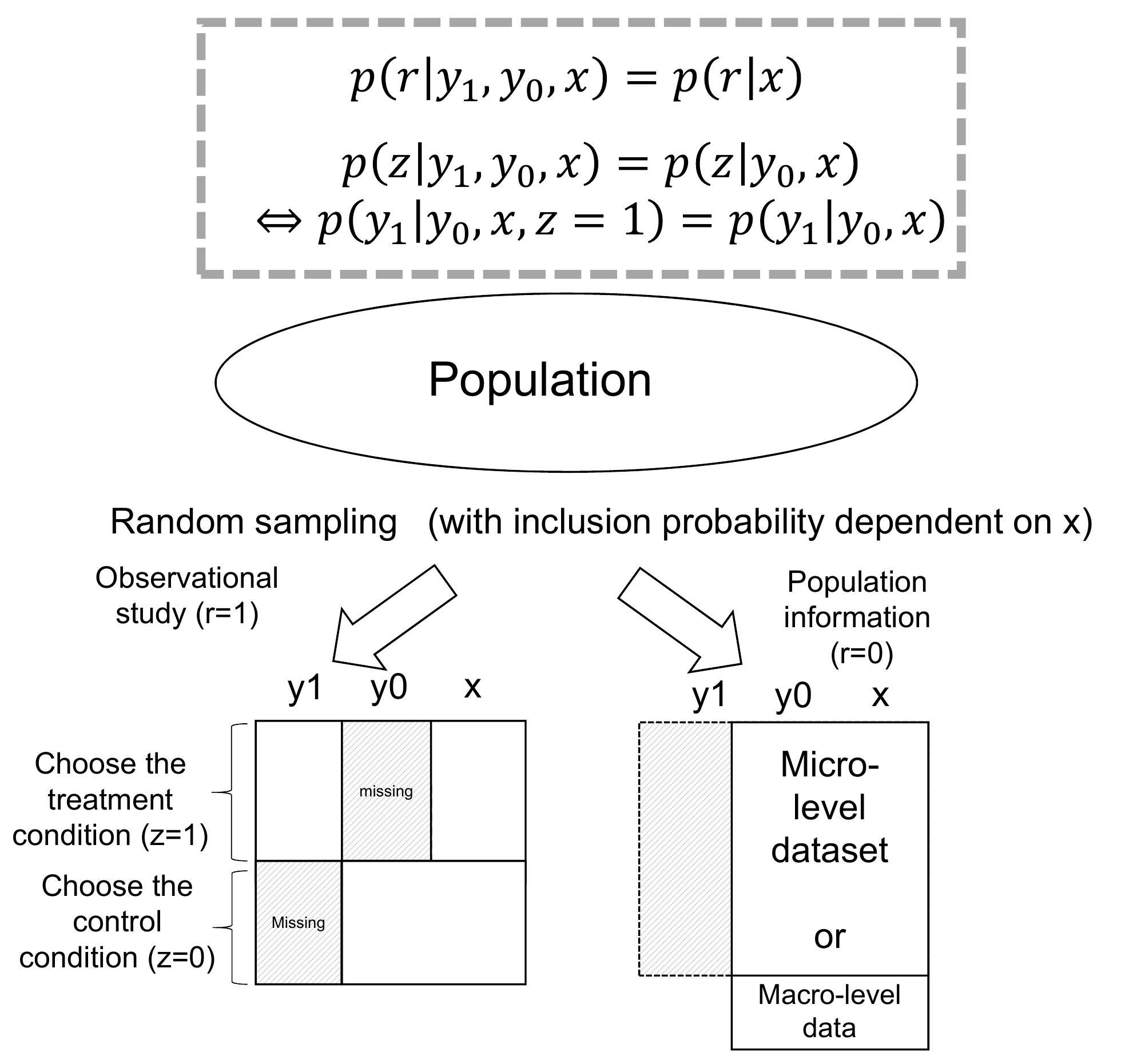}} 
	\caption{The two setups we considered in this paper}
	\label{fig:fig1} 
\end{figure}

\subsection{Organization of the paper}

The remainder of this paper is organized as follows. In Section 2, the HTE identification problem is formulated and the sufficient condition for identification is derived. In Sections 3 we propose a Bayesian and quasi-Bayesian method to estimate the parameters of interest: the HTE function, ATE, ATT, and ATU. Note that the purpose of employing the Bayesian framework here is to use a data augmentation approach to tackle missing values and integrals, and it is not necessary to use informative prior distribution to obtain estimates of them under the conditions obtained in Section 2.
Section 4 illustrates the performance of the proposed estimators through simulation studies. In Section 5, we apply the proposed method to the dataset obtained by the National Job Training Partnership Act (JTPA) Study, a typical randomized controlled trial with one-sided non-compliance.
Finally, the connections to several related fields are discussed and a conclusion is presented in Section 6.

\section{Identification of Causal Estimands}
In this section we discuss the sufficient conditions for identification of the causal estimands.
The assumption for identification of the causal estimands and the estimation framework is the same for setups (a) and (b) in Figure~\ref{fig:fig1}, except for the case when micro-level information is not available but macro-level information is available in setup (b); under the macro-level information alone, the assumption is more restrictive and the estimation method is more complicated (see this section and Section 3 for detail).
First we assume that we obtain at least the information on the marginal distribution of the untreated outcome, $p(y_{0})$, which is available for case (b) considered in the introduction (see also Figure~\ref{fig:fig1}). Later in this section, we will consider case (a), when we can obtain the information of $p(y_{0},x)$, in which the assumptions will be more relaxed.

Our discussion is founded on Hirano et al.~(2001) and Newey and Powell~(2003). Hirano et al.~(2001) consider a situation of nonignorable attrition in a two-period panel while refreshment samples, which are new additional units randomly sampled from the target population, are available (see also Nevo, 2003; Chen et al., 2017). They provide the sufficient condition for identification of the nonignorable missing mechanism. Given Hirano et al.~(2001)'s result, we observe that the identifiability of the HTE reduces to the uniqueness of a solution of some integral equation. The integral equation to be solved here has the same structure as that of nonparametric instrumental variable models; the uniqueness of the integral equation is thus discussed based on Newey and Powell~(2003), who characterize the uniqueness of the integral equation as the completeness of a certain conditional distribution.

\subsection{General identification conditions}

HTE, ATE, ATT and ATU are formulated as
\begin{align*}
\begin{split}
\mathrm{HTE}(y_{0}) &= \Ec{y_{1}-y_{0}}{y_{0}} = \E[x|y_{0}]{\Ec{y_{1}}{y_{0},x}} - y_{0}, \\
\mathrm{ATE} &= \E{y_{1}-y_{0}} = \E[y_{0},x]{\Ec{y_{1}}{y_{0},x}} - \E{y_{0}},\\
\mathrm{ATT} &= \Ec{y_{1}-y_{0}}{z=1} = \Ec{y_{1}}{z=1} - \Ec{y_{0}}{z=1}, \\
\mathrm{ATU} &= \Ec{y_{1}-y_{0}}{z=0} = \E[y_{0},x]{\Ec{y_{1}}{y_{0},x}} - \Ec{y_{0}}{z=0},
\end{split}
\end{align*}
where $\E[u]{\cdot}$ denotes the expectation over $u$.
Note here that $z$ is the compliance indicator in setup (a) but the treatment indicator in setup (b) in Figure~\ref{fig:fig1}.
We can consistently estimate $\E{y_{1}|z=1}$ by using the observed data and $\E{y_{0}}$ by the distributional information. Moreover, as $p(y_{0}|z=1)$ can be calculated by $p(y_{0}|z=1) = (p(y_{0})-p(y_{0}|z=0)p(z=0))/p(z=1)$, the identifiability of ATT is trivial. Therefore, it suffices to provide the conditions for the identification of $p(y_{0},x)$ and $p(y_{1}|y_{0}, x)$ for ATE and ATU, and this is clearly also sufficient for HTE.

First, we discuss the identification of $p(y_{0},x)$. Following Bayes' rule, $p(y_{0},x)$ can be written as
\begin{align}
p(y_{0},x) = \frac{p(y_{0},x|z=0)p(z=0)}{p(z=0|y_{0},x)} = \frac{p(y_{0},x|z=0)p(z=0)}{1-p(z=1|y_{0},x)}.
\label{eq:hirano_bayes}
\end{align}
$p(z=1|y_{0},x)$ can be interpreted as the extended version of the propensity score, where the original version of the propensity score is generally defined as the probability of being assigned to the treatment group given only the covariates (Rosenbaum and Rubin, 1983).
We provide the sufficient condition for the identification of $p(z=1|y_{0},x)$ when $p(y_{0})$ is known, by an application of the result of Hirano et al.~(2001).
\begin{theorem}
If $p(y_{0})$ is known, the extended propensity score is identified in the form of
\begin{align}
p(z=1|y_{0},x) = g(k_{0}+k_{y_{0}}(y_{0}) + k_{x}(x)), 
\label{eq:hirano_result}
\end{align}
where $g$ is a known function that is differentiable, strictly increasing with $\lim_{x\to -\infty} g(x) = 0$ and $\lim_{x\to \infty} g(x) = 1$ and $k_{y_{0}}(\cdot)$ and $k_{x}(\cdot)$ are unique sets of functions subject to normalization, $k_{y_{0}}(0) = k_{x}(0) = 0$.
\end{theorem}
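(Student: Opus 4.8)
The plan is to turn the statement, via Bayes' rule, into a pair of functional equations for $k_{0},k_{y_{0}},k_{x}$ whose unique solvability is the real content; this is exactly the additively nonignorable selection setting of Hirano et al.~(2001). From the observed data on the sub-population where $z$ is defined, $p(x)$, $p(z=1\mid x)$ and $p(y_{0}\mid z=0,x)$ are identified, while $p(y_{0})$ is known by hypothesis; the only object not pinned down is $p(y_{0}\mid x)$. Writing $p(z=0\mid y_{0},x)=p(y_{0}\mid z=0,x)\,p(z=0\mid x)/p(y_{0}\mid x)$ and inserting the postulated form $p(z=1\mid y_{0},x)=g(\eta(y_{0},x))$, where $\eta(y_{0},x):=k_{0}+k_{y_{0}}(y_{0})+k_{x}(x)$, gives $p(y_{0}\mid x)=p(y_{0}\mid z=0,x)\,p(z=0\mid x)/\{1-g(\eta(y_{0},x))\}$. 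Imposing that each $p(y_{0}\mid x)$ is a probability density and that $\int p(y_{0}\mid x)\,p(x)\,dx=p(y_{0})$ yields
\begin{align*}
p(z=0\mid x)\!\int\!\frac{p(y_{0}\mid z=0,x)}{1-g(\eta(y_{0},x))}\,dy_{0}=1\;\;\text{(all $x$)},\qquad
\int\!\frac{p(y_{0},x\mid z=0)\,p(z=0)}{1-g(\eta(y_{0},x))}\,dx=p(y_{0})\;\;\text{(all $y_{0}$)}.
\end{align*}
With the normalization $k_{y_{0}}(0)=k_{x}(0)=0$, identification of the extended propensity score reduces to showing this system has a unique solution; $p(y_{0},x)$ then follows from~(\ref{eq:hirano_bayes}). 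Existence is not at stake: if the extended propensity score really has this additive single-index form, the true functions solve the system.

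For uniqueness I would argue as follows. Suppose two triples with indices $\eta$ and $\eta'$ both solve the system. Since $g$ is strictly increasing with $g<1$, the map $t\mapsto 1/(1-g(t))$ is strictly increasing, so $\phi:=\{1-g(\eta)\}^{-1}-\{1-g(\eta')\}^{-1}$ has the same sign as $\eta-\eta'=\Delta_{0}+\Delta_{y_{0}}(y_{0})+\Delta_{x}(x)$, where $\Delta$ denotes the difference of the two triples. Subtracting the two versions of each equation gives $\int p(y_{0},x\mid z=0)\,\phi\,dx=0$ for every $y_{0}$ and $\int p(y_{0},x\mid z=0)\,\phi\,dy_{0}=0$ for every $x$. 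Multiplying these by $\Delta_{y_{0}}(y_{0})$ and by $\Delta_{x}(x)$ respectively, integrating, and adding (the $\Delta_{0}$ piece integrates to $0$ by either equation), one obtains $\int\!\int p(y_{0},x\mid z=0)\,\phi\,\{\Delta_{0}+\Delta_{y_{0}}(y_{0})+\Delta_{x}(x)\}\,dy_{0}\,dx=0$. The integrand is nonnegative — a density times $\phi$ times $\eta-\eta'$, the last two sharing a sign — so it vanishes on the support of $(y_{0},x)$ given $z=0$; hence $\Delta_{0}+\Delta_{y_{0}}(y_{0})+\Delta_{x}(x)=0$ there. If that support is rich enough (e.g.\ contains a product set meeting the normalization points), an additive function vanishing on it must vanish identically, so all $\Delta$'s are zero.

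The main obstacle is this last step. The sign-matching argument only delivers $\eta=\eta'$ on the conditional support of $(y_{0},x)$ given $z=0$, and one needs that support — together with enough smoothness of the $k$'s — for an additive function vanishing there to vanish everywhere; this is exactly where the additive single-index restriction does the work, and it is the substantive input one imports from Hirano et al.~(2001). I would also check that their hypotheses on the link carry over: the argument above uses only that $g$ is differentiable, strictly increasing, and has the stated limits, not that it is logistic, so it transfers to the general $g$ in the statement.
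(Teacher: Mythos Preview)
The paper's own ``proof'' is a bare citation to p.~1653 of Hirano et al.~(2001); your proposal correctly reconstructs the substance of that argument --- setting up the two integral constraints from the density normalization and the known marginal $p(y_{0})$, and then using the monotonicity of $t\mapsto 1/(1-g(t))$ in the sign-matching step to force $\eta=\eta'$ on the relevant support. The support/additivity caveat you flag at the end is exactly the residual assumption Hirano et al.\ need as well, so your approach and the cited source coincide.
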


\begin{proof}
See p. 1653 of Hirano et al. 2001.
\end{proof}

\noindent
The point of Theorem 1 is that the extended propensity score must be specified so as not to include an interaction term between $y_{0}$ and $x$; that is, additivity has to hold in eq.~(\ref{eq:hirano_result}). Given this result, it is straightforward that $p(y_{0},x)$ is identifiable.

Next, we discuss the identification of $p(y_{1}|y_{0}, x)$.
Let us consider the following integral equation:
\begin{align}
\notag p(y_{1}|x,z=1) &= \int p(y_{1}|y_{0},x,z=1)p(y_{0}|x,z=1)dy_{0} \\
&= \int p(y_{1}|y_{0}, x)p(y_{0}|x,z=1)dy_{0}, 
\label{eq:int_eq}
\end{align}
where the second equality holds under weak ignorability. Note that we can consistently estimate $p(y_{1}|x,z=1)$ using the observed data. Moreover, by substituting $z=0$ with $z=1$ in the middle part of eq.~(\ref{eq:hirano_bayes}), it is easily verified that $p(y_{0}|x,z=1)$ is identifiable. Hence, if there is a unique $p(y_{1}|y_{0}, x)$ that satisfies eq.~(\ref{eq:int_eq}), then $p(y_{1}|y_{0}, x)$ is identifiable, leading to the identifiability of HTE, ATE, and ATU.

Eq.~(\ref{eq:int_eq}) is called the Fredholm integral equation of the first kind; these types of equations are known to be ill-posed problems, in which additional conditions or regularizations are needed to obtain a stable and unique solution. Several models, such as nonparametric instrumental variable models and measurement error models, are known to reduce to solving this type of equation and have been studied in econometrics as statistical inverse problems (see e.g.Carrasco et al., 2007; Horowitz, 2009). Newey and Powell~(2003) characterize the uniqueness of a solution of this type of integral equation as the completeness of the distribution on which the expectation of the function of interest is based; since their pioneering work, the completeness condition has been widely used in econometrics (e.g. Hall and Horowitz, 2005; Blundell et al., 2007; Darolles et al., 2011; Horowitz, 2011; Horowitz and Lee, 2012). Following Newey and Powell~(2003)'s discussion, we provide the sufficient condition for the identification of HTE, ATE, and ATU.
\begin{theorem}
Under weak ignorability and if: 
\begin{itemize}
\item[(c.1)] $p(y_{0})$ is known;
\item[(c.2)] $p(z=1|y_{0},x)$ has no interaction term between $y_{0}$ and $x$;
\item[(c.3)] $p(z=1|y_{0},x)$ is specified by the logistic function, $g(t)=\frac{1}{1+\exp(-t)}$;
\item[(c.4)] $k_{y_{0}}(\cdot)$ in eq.~(\ref{eq:hirano_result}) includes the linear term: $k_{y_{0}}(y_{0})=\theta_{y_{0}}y_{0}+\bar{k}_{y_{0}}(y_{0})$, where $\theta_{y_{0}}$ is non-zero and bounded, and $\bar{k}_{y_{0}}(y_{0})$ is any function of $y_{0}$ subject to normalization, $\bar{k}_{y_{0}}(0)=0$;
\end{itemize}
then, HTE and ATE and ATU are identifiable.
\end{theorem}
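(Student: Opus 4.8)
The plan is to show that the two ``building blocks'' $p(y_0,x)$ and $p(y_1\mid y_0,x)$ are identified; by the discussion preceding the theorem this yields HTE, ATE and ATU simultaneously, ATT being already identified. First I would invoke Theorem~1, justified by (c.1), so that the extended propensity score is identified in the additive form $p(z=1\mid y_0,x)=g(k_0+k_{y_0}(y_0)+k_x(x))$, with $g$ logistic by (c.3) and $k_{y_0}(y_0)=\theta_{y_0}y_0+\bar k_{y_0}(y_0)$, $\theta_{y_0}\neq0$, by (c.4). Plugging this into eq.~(\ref{eq:hirano_bayes}) identifies $p(y_0,x)$ from the observable $p(y_0,x\mid z=0)$ and $p(z=0)$, and replacing $z=0$ by $z=1$ there identifies $p(y_0\mid x,z=1)$. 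Since $p(y_1\mid x,z=1)$ is directly estimable on the $z=1$ subsample, the only object left is $p(y_1\mid y_0,x)$, which by the weak-ignorability step in eq.~(\ref{eq:int_eq}) must solve the displayed Fredholm equation of the first kind, whose left-hand side and kernel $p(y_0\mid x,z=1)$ are now identified. Following Newey and Powell (2003), I would then argue that the solution of that equation is unique --- hence $p(y_1\mid y_0,x)$, and with it HTE, ATE and ATU via the displayed formulas, is identified --- provided the conditional family $p(y_0\mid x,z=1)$ has the relevant completeness property, since the difference of two candidate solutions is annihilated by the conditional-expectation operator with that kernel.

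Everything then reduces to deriving completeness from (c.2)--(c.4), which I expect to be the main obstacle. By Bayes' rule the likelihood ratio $p(y_0\mid x,z=1)/p(y_0\mid x,z=0)$ equals $p(z=1\mid y_0,x)/p(z=0\mid y_0,x)$ times a factor depending only on $x$; with the logistic link this odds ratio is exactly $\exp(k_0+k_{y_0}(y_0)+k_x(x))$, and by additivity (c.2) its $y_0$- and $x$-parts separate, so $p(y_0\mid x,z=1)$ is an exponential tilt $c(x)\exp(\theta_{y_0}y_0)\exp(\bar k_{y_0}(y_0))$ of an identified base density --- with $\exp(\bar k_{y_0}(y_0))$ a known fixed function and the nonzero linear coefficient $\theta_{y_0}$ supplying a genuine exponential-family ``direction'' in $y_0$ as $x$, hence $k_x(x)$, sweeps a nondegenerate range. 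Completeness should then follow from the classical analyticity argument: if $\int h(y_0)\,p(y_0\mid x,z=1)\,dy_0=0$ over that range of $x$, the bilateral Laplace transform of a fixed integrable function vanishes on an interval, hence identically, forcing $h\equiv0$.

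The delicate points in the last step are: (i) pinning down the exact completeness notion ($L^1$-, $L^2$-, or bounded completeness) matching the function class in which $p(y_1\mid y_0,x)$ is taken to live, and making precise how variation in $x$ serves as the ``instrument'' of the Newey--Powell framework despite the unknown kernel itself depending on $x$; (ii) controlling the normalizing constant $c(x)$ and the residual $x$-dependence of the base density; and (iii) imposing the support condition on $k_x(\cdot)$ that makes the effective natural parameter range over a set with nonempty interior. I also anticipate needing integrability/moment side conditions on $p(y_0)$ (and on the outcome law) for the Laplace-transform step.
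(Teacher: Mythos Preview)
Your overall strategy coincides with the paper's: use Theorem~1 under (c.1)--(c.2) to identify the extended propensity score and hence $p(y_0,x)$ and $p(y_0\mid x,z=1)$; note that $p(y_1\mid x,z=1)$ is directly observable; reduce identification of $p(y_1\mid y_0,x)$ to uniqueness in the Fredholm equation~(\ref{eq:int_eq}); and obtain uniqueness from completeness of $p(y_0\mid x,z=1)$, applied to the difference $h=p-\tilde p$ of two candidate solutions. The paper also derives exactly the exponential-tilt representation you describe, arriving at $\int \exp(\theta_{y_0}y_0)\,u(y_0,y_1,x)\,dy_0=0$ with $u$ equal to $h$ times strictly positive factors.

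The one substantive departure is how completeness is argued. You propose to treat $x$ as the ``instrument,'' letting $k_x(x)$ sweep a nondegenerate range so as to generate an exponential family in $y_0$; your delicate points (i)--(iii) all stem from this picture. But observe that $k_x(x)$ enters the exponent additively, not as a coefficient on $y_0$: it affects only the normalization of $p(y_0\mid x,z=1)$ and cannot move the natural parameter, which remains the fixed scalar $\theta_{y_0}$---so variation in $k_x$ alone does not deliver the interval of Laplace-transform arguments your analyticity argument requires. The paper does not go this route at all. Having reached the display above (for each fixed $(y_1,x)$), it simply notes that the equation ``has the same structure as in the discussion of the completeness of the exponential family'' and invokes Lehmann (1986, p.~142, Theorem~1) directly, with $\theta_{y_0}$ itself as the exponential-family parameter, to conclude $u\equiv0$ and hence $h\equiv0$. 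No variation in $x$, no support condition on $k_x$, and no bookkeeping of $c(x)$ or of the $x$-dependent base density enters; condition (c.2) is used only to secure Theorem~1, not in the completeness step. Your concerns (i)--(iii) therefore do not arise in the paper's argument.
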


\begin{proof}
Let $p(y_{1}|y_{0}, x)$ and $\tilde{p}(y_{1}|y_{0},x)$ be any solutions of eq.~(\ref{eq:int_eq}). By subtracting both equations with these solutions inserted, we obtain
\begin{align}
\notag &\int (p(y_{1}|y_{0}, x) - \tilde{p}(y_{1}|y_{0},x) )p(y_{0}|x,z=1)dy_{0} \\
= &\int h(y_{0},y_{1},x) p(y_{0}|x,z=1)dy_{0} = 0, 
\label{eq:completeness}
\end{align}
where $h(y_{0},y_{1},x) = p(y_{1}|y_{0}, x) - \tilde{p}(y_{1}|y_{0},x)$. By Theorem 1, $p(y_{0}|x,z=1)$ is identified under condition (c.1) and (c.2). Then if it is the case that $h(y_{0},y_{1},x)$, which solves eq.~(\ref{eq:completeness}), is always zero for all $y_{0}$ given any fixed $y_{1}$ and $x$, then $p(y_{1}|y_{0}, x)$ is equal to $\tilde{p}(y_{1}|y_{0},x)$ and eq.~(\ref{eq:int_eq}) has a unique solution. This fact implies that $p(y_{1}|y_{0}, x)$ is identifiable if $p(y_{0}|x,z=1)$ is complete.

Under condition (c.3), $p(y_{0}|x,z=1)$ can be rewritten as
\begin{align}
\notag p(y_{0}|x,z=1) &= \frac{p(y_{0},x)p(z=1|y_{0},x)}{p(x,z=1)} = \frac{p(y_{0},x|z=0)p(z=0)p(z=1|y_{0},x)}{p(z=0|y_{0},x)p(x,z=1)} \\
&= \exp(k_{0} + k_{y_{0}}(y_{0}) + k_{x}(x)) \frac{p(y_{0},x|z=0)p(z=0)}{p(x,z=1)}.
\label{eq:target_temp}
\end{align}
In addition, under (c.4), plugging eq.~(\ref{eq:target_temp}) into eq.~(\ref{eq:completeness}) yields
\begin{align}
\int \exp(\theta_{y_{0}}y_{0}) u(y_{0},y_{1},x) dy_{0} = 0
\label{eq:result}
\end{align}
where
\begin{align}
u(y_{0},y_{1},x) = h(y_{0},y_{1},x)\exp(k_{0}+\bar{k}_{y_{0}}(y_{0})+k_{x}(x)) \frac{p(y_{0},x|z=0)p(z=0)}{p(x,z=1)}.
\label{eq:def_m}
\end{align}

Eq.~(\ref{eq:result}) has the same structure as in the discussion of the completeness of the exponential family. Therefore, considering Theorem 1 (Lehmann, 1986, p. 142), this leads to $u(y_{0},y_{1},x) = 0$. As each value on the right-hand side of eq.~(\ref{eq:def_m}) is strictly positive except $h(y_{0},y_{1},x)$, it follows that $h(y_{0},y_{1},x) = 0$, that is, eq.~(\ref{eq:int_eq}), has a unique solution.
\end{proof}

Note that, in a case where we only know some of the moments of $p(y_{0})$, two additional conditions need to be imposed for identifying $p(z=1|y_{0},x)$ (Nevo, 2002). First, the moment information includes $\E{y_{0}}$. Second, $\bar{k}(y_{0})$ is a linear combination of known functions of $y_{0}$ whose moments are known. In other words, $k_{y_{0}}(\cdot)$ is specified as $k_{y_{0}}(y_{0})=\sum_{j=1}^{p} \theta_{y_{0}j}T_{j}(y_{0})$, where $T_{1}(y_{0})=y_{0}$ and $\E{T_{j}(y_{0})}$ are known for all $j=1,\dots,p$. 
On the other hand, if we can use the information on $p(y_{0}, x)$ (e.g. the setup (a) given in Section 1), we can relax the additivity condition (c.2) by making suitable modifications on condition (c.4).

Note also that the conditional density $p(y_{1}|y_{0}, x)$ is useful to determine the treatment assignment policy or rule indexed by the decision set $G \subset {\it X} \subset \mathbb{R}^{dim(X)}$ of covariate value $X$ which maximizes the social welfare (Kitagawa and Tetenov, 2018),
\begin{align}
W(G) = \iint_{X \in G} y_1 p(y_1|y_0,x) p(y_0|x) p(x) dy_1 dy_0 dx + \iint_{X \notin G} y_0 p(y_0|x) p(x) dy_0 dx.
\end{align}

\subsection{RCT allowing one-sided non-comliance}

In this subsection we mention some notes for the RCT with one-sided non-compliance (Figure~\ref{fig:fig1a}).
In this setup, we can identify HTE, ATE, ATT, or ATU for a randomized controlled trial with one-sided non-compliance, that is, where for the control group all the participants comply with the control condition while for the treatment group not all the participants comply with their treatment or individuals are allowed to choose their treatment.

The RCT design with one-sided non-compliance is very easy for researchers to implement, because they do not have to enforce complete compliance, and a large portion of RCT studies result in this design, as will be considered in Section 5.

Note that under the condition that $z \neq 1$ for $r=0$ (i.e. one-sided non-compliance), LATE (Imbens and Angrist, 1994; Angrist, Imbens and Rubin, 1996) is equivalent to ATT (or to be more specific, to the average treatment effect for the complier) under the ``monotonicity assumption", then the standard instrumental variable method (often called the Wald estimator) can consistently estimate ATT, but not ATE, ATU, or HTE.
In contrast, under the identification conditions developed here, the proposed method can identify ATE, ATT, ATU, HTE, and these estimands as functions of the covariate vector, without the monotonicity assumption required in the LATE framework.

\section{Estimation}
In this section, we propose an estimation method for $\E{y_{1}|y_{0}}$. 
We consider a case in which the joint distribution of two potential outcomes given the covariates is parametrically modelled: $p(y_{1},y_{0}|x,\psi)=p(y_{1}|y_{0},x,\psi)p(y_{0}|x,\psi)$, where $\psi$ is the parameter vector.

Let us consider a random sample with size $N$ in which for the $i$-th unit the potential random variable is the vector $(y_{i1},y_{i0},x_{i},r_{i},z_{i})$ for $i=1,\dots, N $. In this setup, $z_{i}$ is observed for $r_{i}=1$ but $z_{i}$ and $y_{i1}$ are not observed for $r_{i}=0$. 

For $z_{i}=1$ and $ r_{i}=1$, the likelihood includes integration with respect to $y_{0}$ because of the dependence on $y_{0}$ in the missing mechanism. In general, such integral is not solved analytically, so a numerical calculation is needed. In Bayesian inference we can employ the data augmentation algorithm and sample the missing portion of $y_{0}$, $y_{0}^{\mis}$, as the parameters.

\subsection{Setup (a) and setup (b) with micro-level data}

Under weak ignorability, the likelihood given the observed data for setup (a) or setup (b) with micro-level data is expressed as follows:
\begin{align*}
&L(\psi| y^{\obs}, z^{\obs}, r, x) \\
&= \prod_{i=1}^{N} \Bigl( \left[p(y_{i1},z_{i}=1|x_{i},\psi) \right]^{z_{i}} \left[p(y_{i0},z_{i}=0|x_{i},\psi) \right]^{1-z_{i}} \Bigr)^{r_{i}} \times \Bigl( p(y_{i0}|x_{i},\psi) \Bigr)^{1-r_{i}}  \\
&=\prod_{i=1}^{N} \Bigr( \left[ \int p(y_{i1},y_{i0}|x_{i},\psi)p(z_{i}=1|y_{i0},x_{i},\psi)dy_{i0} \right]^{z_{i}} \\
&\quad \times \left[ p(y_{i0}|x_{i},\psi)p(z_{i}=0|y_{i0},x_{i},\psi) \right]^{1-z_{i}} \Bigr)^{r_{i}} \times \Bigl( p(y_{i0}|x_{i},\psi) \Bigr)^{1-r_{i}},
\end{align*}
where $(y^{\obs},z^{\obs}=(y_{1}^{\obs}, y_{0}^{\obs},z^{\obs})$ is the observed portion of $(y,z)=(y_{1},y_{0},z)$.

The resulting posterior distribution\footnote{In a simple model setup we can employ numerical integration, as shown in the next section.} is
\begin{align}
\label{eq:posterior}
p(\psi| y^{\obs}, z, x) \propto L(\psi| y^{\obs}, z^{\obs} z, r, x) \times p(\psi),
\end{align}
where $p(\psi)$ is the prior distribution of the parameter vector $\psi$.

Alternatively, we can employ a data augmentation approach in which the missing $y_{0}$'s for $r=1$ and $z=1$ are drawn from the joint posterior distribution,
\begin{align}
\label{eq:posteraug}
& p(\psi,y_{0}^{\mis}| y^{\obs}, z^{\obs}, x) \propto p(\psi) \times \\
\notag
& \prod_{i=1}^{N} \Bigl( p(z_{i}|y_{i0},x_{i},\psi) \left[ p(y_{i1},y_{i0}|x_{i},\psi) \right]^{z_{i}} 
\left[ p(y_{i0}|x_{i},\psi) \right]^{1-z_{i}} \Bigr)^{r_{i}}
\times \Bigl( p(y_{i0}|x_{i},\psi) \Bigr)^{1-r_{i}} .
\end{align}

\subsection{Setup (b) with macro-level data}

For setup (b) with macro-level moment information, we must also incorporate auxiliary information into the likelihood to identify the extended propensity score.
Nevo (2003) proposes a GMM-type estimator for the nonignorable missing model with moment conditions by adopting the frequentist approach. To follow Nevo (2003)'s method but from a Bayesian perspective, we employ a quasi-Bayesian approach. Chernozhukov and Hong~(2003) show that if an estimator is the solution of some optimization problem, we can obtain the estimator by using samples from $\exp(Q(\psi))/\int \exp(Q(\psi))p(\psi)d\psi$, where $\psi$ is the parameter of interest and $Q(\psi)$ is the objective function to be minimized. Then, for example, a GMM estimation can be conducted based on MCMC samples. Basically, if we want to use both a likelihood and an objective function, the likelihood has to be transformed to the score function and combined with the objective function into the moment condition, which makes it difficult to draw samples. Recently, Hoshino and Igari (2017) show that the quasi-Bayes estimator is consistent and asymptotically normally distributed if the quasi-posterior is set to be proportional to the likelihood part multiplied by the exponential of the GMM objective function (see also Igari and Hoshino, 2018).
Based on their results, we incorporate macro-level moment information on $p(y_{0})$ into the likelihood, expressed as
\begin{align}
\notag
&L(\psi| y^{\obs}, z^{\obs}, r, x) = \prod_{i:r_{i}=1 }^{N} \left[p(y_{i1},z_{i}=1|x_{i},\psi)\right]^{z_{i}}\left[p(y_{i0},z_{i}=0|x_{i},\psi)\right]^{1-z_{i}} \\
\notag
&=\prod_{i:r_{i}=1 }^{N} \left[ \int p(y_{i1},y_{i0}|x_{i},\psi)p(z_{i}=1|y_{i0},x_{i},\psi)dy_{i0} \right]^{z_{i}} \\
&\quad \times \left[ p(y_{i0}|x_{i},\psi)p(z_{i}=0|y_{i0},x_{i},\psi) \right]^{1-z_{i}} ,
\label{eq:likelihood_int2}
\end{align}
as the quasi-joint posterior distribution:
\begin{align*}
p(\psi,y_{0}^{\mis}| y^{\obs}, x) \propto L(\psi| y^{\obs}, z^{\obs} z, r, x) \times \exp(Q_{0}(\psi)) \times p(\psi),
\end{align*}
where $Q_{0}(\psi)$ is the GMM objective function,
\begin{align}
Q_{0}(\psi) &= -\frac{N_{0}}{2}\left(\frac{1}{N_{0}}\sum_{i:r_{i}=1,z_{i}=0}m_{0}(y_{i0},x_{i},\psi)\right)'W_{0}\left(\frac{1}{N_{0}}\sum_{i:r_{i}=1,z_{i}=0}m_{0}(y_{i0},x_{i},\psi)\right),
\label{eq:obj_gmm}
\end{align}
$N_{0}=\sum_{i=1}^{N} (1- r_{i})$, $W_{0}$ is a weight matrix, $m_{0}(y_{0},x,\psi)$ is a moment function induced from the information for the control condition in setup (a) or from the population information in setup (b), which is set so that $\E{m_{0}(y_{0},x,\psi)|z=0}=0$. For example of the case with $k_{y_{0}}(y_{0}) + k_{x}(x)= \beta_{y0} y_{0} + x'\beta_{x}$, the moment function is
\begin{align*}
m_{0}(x,y_{0},\psi)=
\begin{pmatrix}
1/p(z=0|y_{0},x,\psi)-1/p(z=0) \\
(x-\E{x})/p(z=0|y_{0},x,\psi) \\
(y_{0}-\E{y_{0}})/p(z=0|y_{0},x,\psi) \\
\end{pmatrix}.
\end{align*}

\subsection{MCMC implementation}

For both setups (a) and (b) with micro-level data, we use a straightforward application of traditional Markov chain Monte Carlo (MCMC) algorithms to draw samples of parameter vector $\psi$ and missing $y_{0}$'s for $r=1$ and $z=1$ from their joint posterior distribution.
We employ a data augmentation approach in which the missing $y_{0}$'s for $r=1$ and $z=0$ are drawn from the quasi-joint posterior distribution. In this setup, the missing $y_{i0}$ value for $r_{i}=1$ and $z_{i}=1$ is drawn from the full conditional posterior distribution, which is proportional to $p(z_{i}=1|y_{i0},x_{i},\psi) p(y_{i1},y_{i0}|x_{i},\psi)$. To draw samples from the full posterior distribution of $\psi$, it is convenient to use the Metropolis-Hastings algorithm, through which we draw the new candidate $\psi^{cand}$ from the density proportional to $L(\psi| y^{\obs}, y^{mis}, z^{\obs} z, r, x) \times p(\psi)$, where
\begin{align*}
&L(\psi| y^{\obs}, y^{mis}, z^{\obs}, r, x) \\
&=\prod_{i:r_{i}=1}^{N} \left[ p(y_{i1},y_{i0}|x_{i},\psi)p(z_{i}=1|y_{i0},x_{i},\psi) \right]^{z_{i}} \times \left[ p(y_{i0}|x_{i},\psi)p(z_{i}=0|y_{i0},x_{i},\psi) \right]^{1-z_{i}} ,
\end{align*}
and accept the new candidate with the probability
\begin{align}
min \Bigl\{ 1, \frac{\exp(Q_{0}(\psi^{cand})) }{\exp(Q_{0}(\psi^{old}))} \Bigr\}
\end{align}
where $\psi^{old}$ is the previous value (see also Igari and Hoshino, 2018).

\section{Simulation Study}

We conduct a simple simulation study to examine the performance of the estimator shown in the previous section. The data are generated as follows:
\begin{align*}
x \sim \normal{0}{1.5^{2}}, \ y_{0}\,|\,x \sim \normal{\mu_{0}(x)}{\sigma_{0}^{2}}, \ y_{1}\,|\,y_{0},x \sim \normal{\mu_{1}(y_{0},x)}{\sigma_{1}^{2}}
\end{align*}
where $\mu_{0}(x) = \theta_{00}+\theta_{01}x\,((\theta_{00},\theta_{01})=(1.0,0.6))$, $\mu_{1}(y_{0},x) = \theta_{10}+\theta_{11}x+\theta_{12}y_{0}+\theta_{13}y_{0}^{2}\,((\theta_{10},\theta_{11},\theta_{12},\theta_{13})=(1.5,0.5,0.6,-0.2))$, $\sigma_{0}=0.5$, $\sigma_{1}=0.6$, and the sample size is $N=1000$. We specify the missing mechanism (eq.~(\ref{eq:hirano_result})) as
$k_{0}=\beta_{0}$, $k_{x}(x) = \beta_{1} x$, and $k_{y_{0}}(y_{0}) =\beta_{2} y_{0}$ ($(\beta_{0},\beta_{1},\beta_{2})=(-1.2,0.8,0.6)$), 
and $z_{i}$ is drawn from $\mathrm{B}(1,p_{i})$ where $p_{i} = g(k_{0}+k_{y_{0}}(y_{i0}) + k_{x}(x_{i}))$ and $g$ is the logistic function. MCMC samples are obtained by the Hamiltonian Monte Carlo algorithm using Stan (Gelman et al., 2013) from the posterior, eq.~(\ref{eq:posterior}), which has 3000 iterations including 1000 for warm-up. 
\begin{figure}	[htbp]
	\centering
	\subcaptionbox{Plot of $(x,y_{0})$\label{fig:plot_x_y_{0}}}
	{\includegraphics[width=0.40\linewidth]{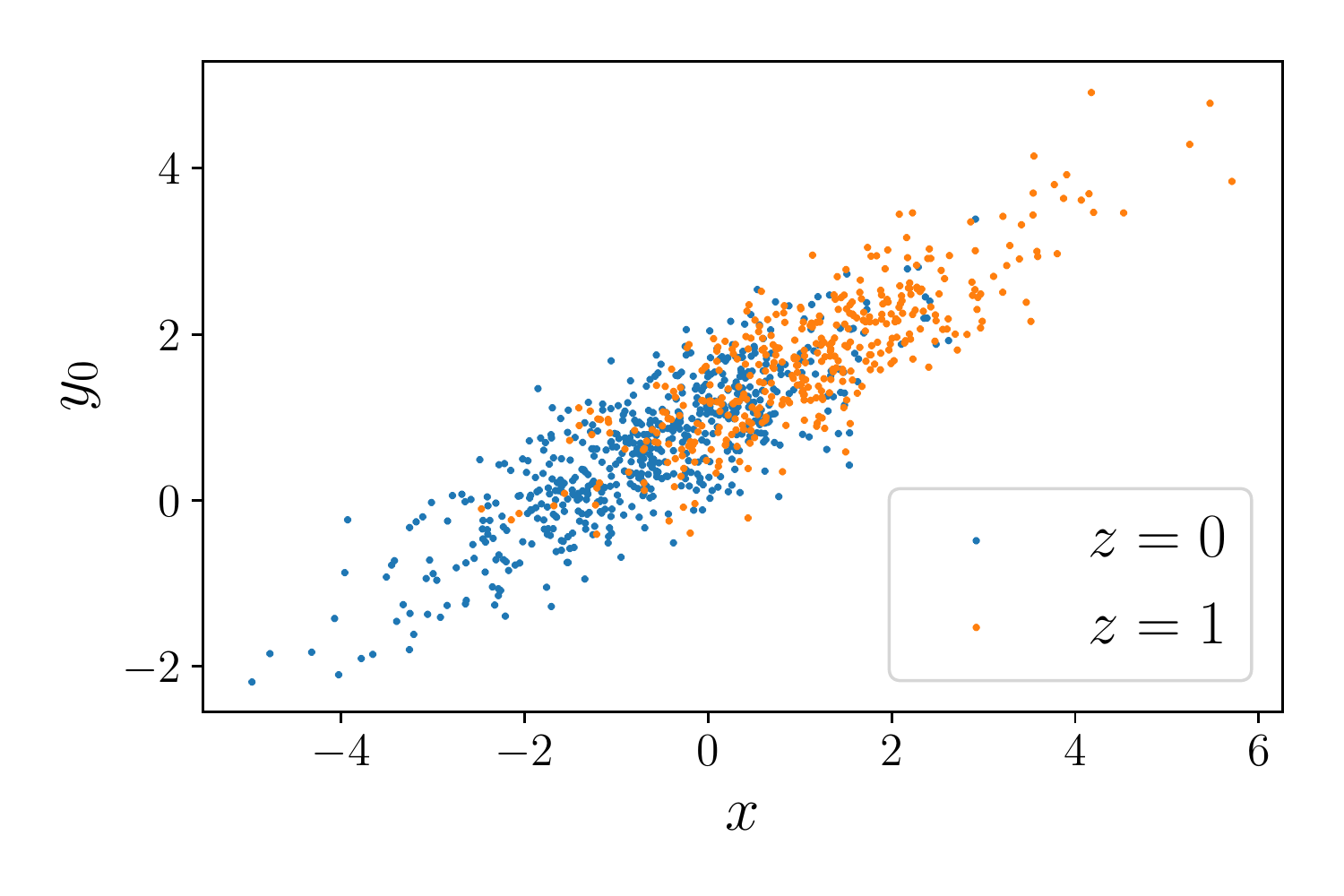}}
	\subcaptionbox{Plot of $(x,y_{1})$\label{fig:plot_x_y_{1}}}
	{\includegraphics[width=0.40\linewidth]{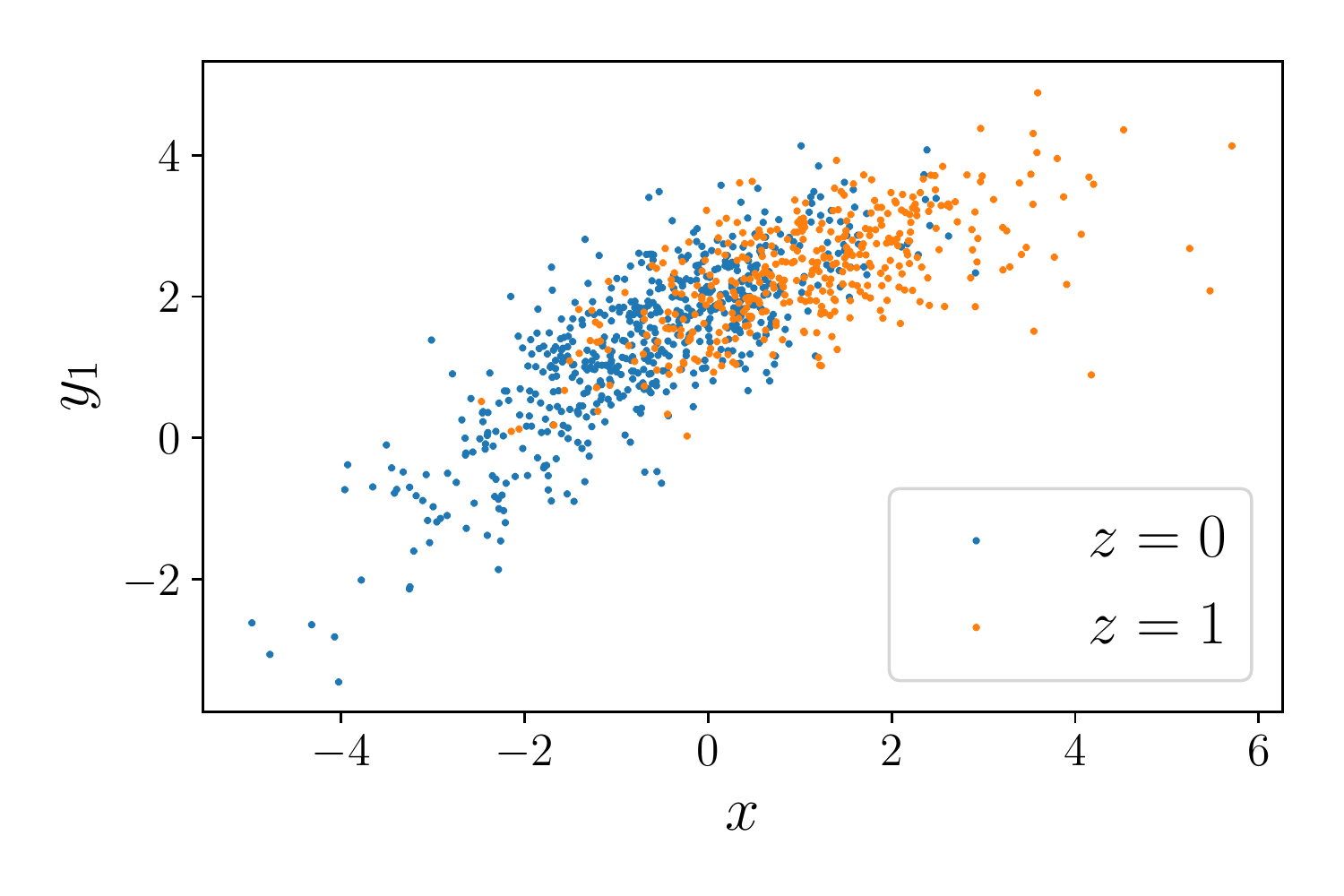}}	
	\caption{Scatter plots of the outcomes between the treatment and control group.}
	\label{fig:plots_y_dist} 
\end{figure}

Table \ref{table:result_summary} summarizes the MCMC samples over 500 replications. The mean of the posterior means was close to the true values for all the parameters, while the standard deviations were rather large, particularly in the parameters for the missing mechanism, $\beta$. Although the estimator of the missing mechanism varied greatly, it seemed not to have a significant impact on the other estimators, since the parameters for $p(y_{0}|x)$ were estimated with small variance.
\begin{table}[htbp]
\begin{center}
\caption{Summary of the simulation}
\begin{tabular}{lrccc}
\hline
Parameter (True) & Mean & s.d. & Cov. (\%) & MSE \\\hline
$\theta_{10} \, \, (1.5)$ & 1.509 & 0.371 & 95.6 & 0.242 \\
$\theta_{11} \,\, (0.5)$ & 0.496 & 0.161 & 95.2 & 0.047 \\
$\theta_{12} \,\, (0.6)$ & 0.576 & 0.353 & 94.8 & 0.224 \\
$\theta_{13} \,\, (-0.2)$ & $-0.191$ & 0.050 & 92.8 & 0.005 \\
$\theta_{00} \,\, (1.0)$ & 1.002 & 0.063 & 98.4 & 0.006 \\
$\theta_{01} \,\, (0.6)$ & 0.600 & 0.030 & 98.4 & 0.001 \\
$\sigma_{1} \,\, (0.6)$ & 0.574 & 0.050 & 97.2 & 0.005 \\
$\sigma_{0} \,\, (0.5)$ & 0.505 & 0.017 & 96.0 & 0.001 \\
$\beta_{0} \,\, (-1.2)$ & $-1.342$ & 0.725 & 98.0 & 0.869 \\
$\beta_{1} \,\, (0.8)$ & 0.806 & 0.383 & 98.0 & 0.230 \\
$\beta_{2} \,\, (0.6)$ & 0.677 & 0.675 & 98.4 & 0.728 \\\hline
\end{tabular}
\label{table:result_summary}
\end{center}
(Mean): mean of the posterior mean over 500 replications, (s.d.): standard deviation, (Cov): the proportion of times the 95\% credible interval of the estimated posterior mean contains the true value, (MSE): mean squared error over 500 replications.
\end{table}

Figure~\ref{fig:HTE_plot} shows the mean and 95\% credible bands for $\Ec{y_{1}}{y_{0}}$ over 500 replications. The red dashed line shows the true curve, while the black line shows the mean of the posterior means. As this indicates, the mean was very close to the truth and the credible band included it, capturing the nonlinear relationship between the potential outcomes. The reason why the variance is large on the right edge is that there is only the small number $y_{0}$ observed around there (see Figure~\ref{fig:plot_x_y_{0}}).
\begin{figure}[htbp]
	\centering
	\includegraphics[width=0.6\linewidth]{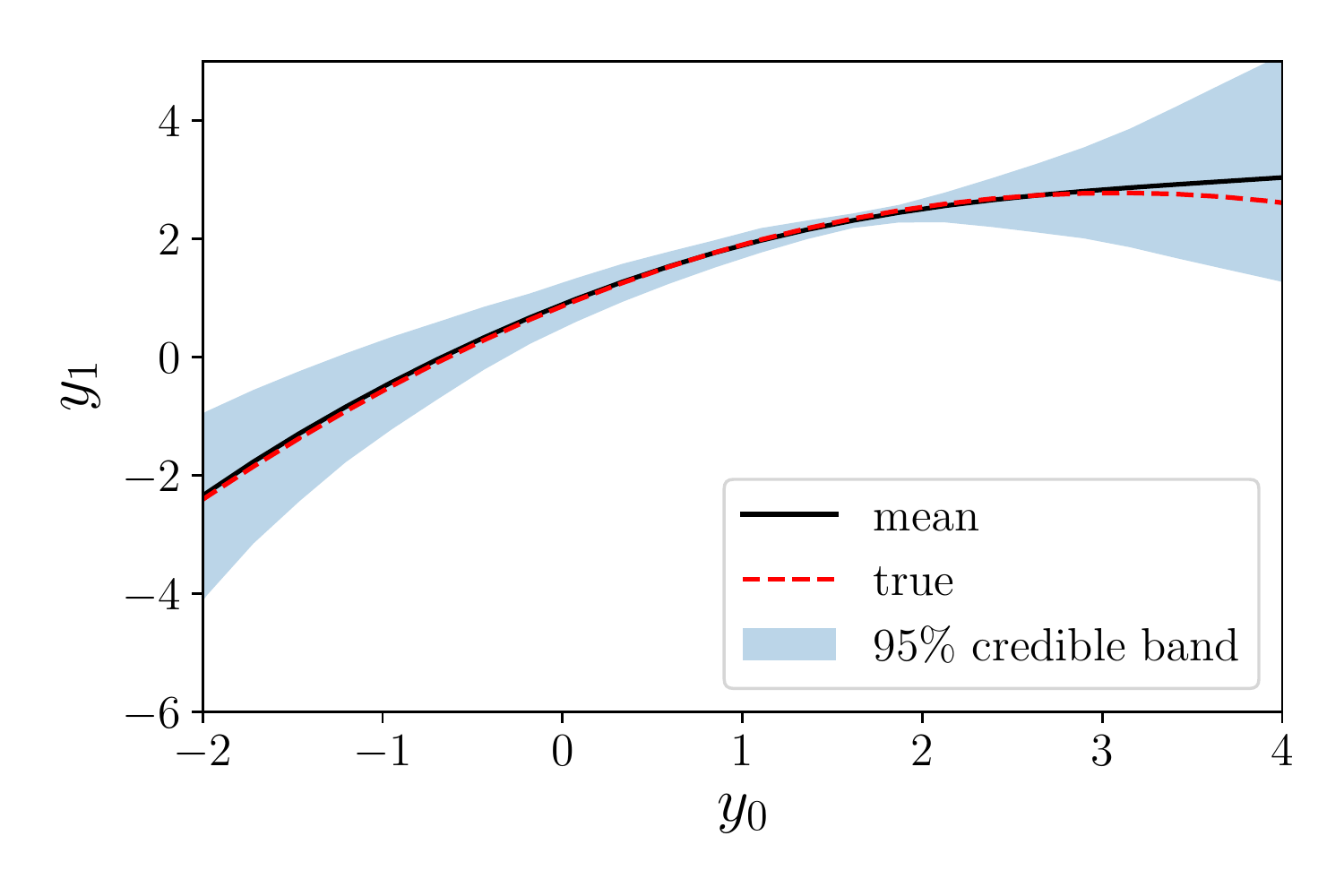}
	\caption{Result of the estimation of HTE: the mean and 95\% credible band}
	\label{fig:HTE_plot} 
\end{figure}

Table \ref{table:ate} compares the proposed estimate (``Prop") with the two existing methods, the simple mean difference (``Mean Diff.") and the inverse probability weighted estimator (``IPWE"), for the estimation of ATE. We calculated the proposed estimator through the following formula:
\begin{align*}
\E{y_{1}} &= \int \Ec{y_{1}}{y_{0},x,\psi}p(y_{0}|x,\psi)p(x)dy_{0}dx \\
&\simeq \frac{1}{N}\sum_{i=1}^{N}\int \Ec{y_{1}}{y_{0},x_{i},\psi}p(y_{0}|x_{i},\psi)dy_{0}.
\end{align*}
We also calculated the IPWE for ATE under strong ignorability, where we specified the propensity score as $p(z=1|x) = 1/(1+\exp(-(\gamma_{0}+\gamma_{1}x+\gamma_{2}x^{2})))$.

In this model setup, the average AUC (area under the curve) of the ROC (receiver operating characteristic), often called the ``c-statistic", over the generated 500 datasets, which is used a measure of the fit of a propensity score model using only observed covariates, was 0.837.
However, the result for IPWE under strong ignorability was biased, although the average c-statistic was greater than the recommended rule-of-thumb for propensity score methods, 0.8 (e.g. Ohman et al., 2008).
On the other hand, the mean of the results from our model was close to the truth, with high coverage rates.
The LATE estimator is not the estimator of ATE, but the estimator of the average treatment effect ``for the complier", so it is reasonable that the average of LATE estimates (0.641) is considerably different from the true value of ATE. In short, the LATE estimator should not be treated as the ATE estimator.

\begin{table}[htbp]
\centering
\caption{Estimation of ATE over 500 replications}
\begin{tabular}{lrccc}
\hline
ATE (True: 0.688) & Mean & s.d. & Cov. (\%) & MSE \\\hline
Prop. & 0.693 & 0.0925 & 98.3 & 0.0146 \\
IPWE & 0.727 & 0.106 & 78.0 & 0.0429 \\
Mean Diff. & 0.260 & 0.015 & 0.0 & 0.1860 \\\hline
\end{tabular}
\label{table:ate}
\end{table}

\section{Real Data Analysis}

We apply our method to the National Job Training Partnership Act (JTPA) study (Bloom et al., 1997; Abadie et al., 2002), which is one of the largest randomized job-training evaluations in the US. In the JTPA, the data of about 20,000 participants are collected from 16 regions for evaluation. These areas were not randomly chosen from all of the areas where the JTPA services are delivered; instead, the treatment was randomly assigned to participants within sites. The JTPA study is suitable for the purpose of our analysis because it captures a large amount of one-sided noncompliance; although the treatment was randomly assigned, only about 60\% of the participants in the treatment group received JTPA services, while about 2\% of the participants in the control group did. Thus, treatment status is likely to be dependent on potential outcomes, satisfying weak ignorability condition.

In this illustrative analysis, we focus on the target outcome and participants, following Abadie et al.~(2002). As the target outcome, we evaluate the sum of earnings over 30 months after random assignment; for the target population, we focus on adult women. The covariates used in the analysis are earnings in the previous year, age, dummies for black or Hispanic applicants, a dummy for high-school graduates (including GED (General Educational Development)-holders), dummies for married applicants, and whether the applicant worked at least 12 weeks in the 12 months preceding random assignment. Then, $d=Dim(x)=$ 6. In Tables 1 in the supplementary material, we reproduce the descriptive statistics reported in Abadie et al. (2002, Table 1) (the data are available on Angrist's web page). Note that we ignore the participants who were assigned to the control group but received JTPA services (about 2 \% of participants in the control group) to focus on one-sided noncompliance, so some parts of the table are different from that of Abadie et al.~(2002).
Because the outcome considered here is wage, which is censored at zero due to unemployment at follow-up period, we employ a variant of the Tobit-type model (for the Tobit-type model, see e.g. Amemiya 1985; Koop et al. 2007); we assume the latent variables behind the potential outcomes,
\begin{align*}
y_{0}=\begin{cases}
y_{0}^{*} & (y_{0}^{*}>0)\\
0 & (y_{0}^{*}\leq0)
\end{cases},\quad y_{1}=\begin{cases}
y_{1}^{*} & (y_{1}^{*}>0)\\
0 & (y_{1}^{*}\leq0)
\end{cases},
\end{align*}
and $p(y_{0}^{*}|x)$ and $p(y_{1}^{*}|y_{0}^{*},x)$ are the Gumbel and normal distributions respectively:
\begin{align*}
y_{0}^{*} & \sim G(\mu_{0}(x),\sigma_{0}^{2}),\quad y_{1}^{*}
\sim \normal{\mu_{1}(y_{0}^{*},x)}{\sigma_{1}^{2}},
\end{align*}
where $\mu_{0}(x)=\xi_{0}+x'\xi_{x}$ and $\mu_{1}(y_{0}^{*},x)=\lambda_{0}+\lambda_{1}y_{0}^{*}+x'\lambda_{x}$, and $G(a,b^{2})$ is the Gumbel distribution where the location and scale parameters are $a$ and $b$. Although it is standard to use a normal distribution for the Tobit model, we employ the Gumbel distribution, because we observe skewness in the distribution of $y_{0}^{*}$ and in its computational stability. The density of $y_{0}$ in the control group decreases from zero to the tail, while the censored units accounts for only about 14\% of the population, so a symmetrical distribution may be inappropriate with censoring considered\footnote{We also tried to use the skew-normal distribution for $p(y_{0}^{*})$, but the MCMC algorithm did not converge in this case.}. There are several studies where the Gumbel distribution is used for analyses of earnings (e.g. Horsky, 1990). See supplementary material for a histogram of $y_{0}$ for $r=0$ and the fitted Gumbel density.
Finally, we specify compliance probability as $p(z=1|y_{0}^{*},x)=1/(1+\exp(-(\beta_{0}+\beta_{1}y_{0}^{*}+\beta_{2}y_{0}^{*2}+x'\beta_{x})))$.
Then, the counterpart of eq.~(\ref{eq:int_eq}) is 
 $p(y_{1}^{*}|x,z=1)=\int p(y_{1}^{*}|y_{0}^{*},x)p(y_{0}^{*}|x,z=1)dy_{0}^{*}.$

In this setting, we need to make some modifications to the estimation method for compliance probability, $p(z=1|y_{0}^{*},x)$, described in section 3. First, we cannot use $p(y_{0})$ directly from the control group to identify $p(z=1|y_{0}^{*},x)$, so we need to estimate $p(y_{0}^{*})$ by $\hat{p}(y_{0}^{*})\simeq1/N_{\text{c}}\sum_{i:r_{i}=0}\hat{p}(y_{0}^{*}|x_{i})$ for moment conditions. Second, since an exact value of $y_{0}^{*}$ is not observed if $y_{0}^{*}\leq0$, we need to sample censored $y_{0}^{*}$'s as the parameter during iterations of MCMC. 

The posterior probability is then
\begin{align*}
p(\psi,\bar{y}_{0}^{*}|y^{\obs}, z, r, x) & \propto\prod_{i:r_{i}=1,z_{i}=1}\left[\left(\int p(y_{i1}^{*}|y_{i0}^{*},x_{i},\psi)p(y_{i0}^{*}|x_{i},\psi)p(z_{i}=1|y_{i0}^{*},x_{i},\psi)dy_{i0}^{*}\right)^{\delta(y_{i1}>0)}\right.\\
 & \left.\qquad\times\left(\iint_{-\infty}^{0}p(y_{i1}^{*}|y_{i0}^{*},x,\psi)p(y_{i0}^{*}|x,\psi)p(z_{i}=1|y_{i0}^{*},x_{i},\psi)dy_{i1}^{*}dy_{i0}^{*}\right)^{\delta(y_{i1}=0)}\right]\\
 & \quad\times\left. \prod_{i:r_{i}=1,z_{i}=0}\right[\left(p(y_{i0}^{*}|x_{i},\psi)p(z_{i}=0|y_{i0}^{*},x_{i},\psi)\right)^{\delta(y_{i0}>0)}\\
 & \left.\qquad\times\left(\int_{-\infty}^{0}p(y_{i0}^{*}|x_{i},\psi)p(z_{i}=0|y_{i0}^{*},x_{i},\psi)dy_{i0}^{*}\right)^{\delta(y_{i0}=0)}\right]\\
 & \quad\times\left[\prod_{i:r_{i}=0}\left(p(y_{i0}^{*}|x_{i},\psi)\right)^{\delta(y_{i0}>0)}\left(p(\bar{y}_{i0}^{*}|x_{i},\psi)\right)^{\delta(y_{i0}=0)}\right]\\
 & \quad\times\exp(Q_{0}(\psi)).
\end{align*}
where $\psi$ is the parameter of interest, $\psi=(\xi,\lambda,\sigma,\beta)$, $\delta(\cdot)$ is an indicator function, $Q_{0}(\cdot)$ is the moment constraint as eq.~(\ref{eq:obj_gmm}), given from
\begin{align*}
m_{0}(x,y_{0}^{*},\psi)=\begin{pmatrix}1/p(z=0|y_{0}^{*},x,\psi)-1/p(z=0)\\
(x_{1}-\E{x_{1}})/p(z=0|y_{0}^{*},x,\psi)\\
\vdots\\
(x_{d}-\E{x_{d}})/p(z=0|y_{0}^{*},x,\psi)\\
(y_{0}^{*}-\E{y_{0}^{*}})/p(z=0|y_{0}^{*},x,\psi)\\
(y_{0}^{*2}-\E{y_{0}^{*2}})/p(z=0|y_{0}^{*},x,\psi)
\end{pmatrix},
\end{align*}
and $\bar{y}_{0}^{*}=(\bar{y}_{10}^{*},\dots,\bar{y}_{N_{\text{c}}0}^{*})$
is the vector of latent variables for the units of zero-earnings in
the control group.

The estimated coefficients are shown in Table \ref{table:jtpa_coef_women}, where the variables with respect to earnings are scaled by $1/10000$. For compliance probability $p(z=1|y_{0}^{*},x)$, it can be seen that the latent variable for the potential untreated outcome, $y_{0}^{*}$, has a significant impact on compliance probability, which implies that weak ignorability condition is satisfied. While some of the other variables may not be significant, because the credible intervals include zero, it seems that educational background (whether a participant has a high school diploma or GED) has a positive impact on compliance probability. For the conditional mean $\E{y_{1}^{*}|y_{0}^{*},x}=\mu_{1}(y_{0}^{*},x)$, earnings in the past year have a positive effect on the potential treated outcome, while the dummy for black or Hispanic applicants is negative, consistent with most existing literature.
\begin{table}
\caption{Estimated Coefficients}
\centering
{\footnotesize
\begin{tabular}{cccccccc}
\hline 
 & \multicolumn{3}{c}{$\mu_{1}(y_{0}^{*},x)$} & & \multicolumn{3}{c}{$p(z=1|y_{0}^{*},x)$}\tabularnewline
\cline{2-4} \cline{6-8} 
Coefficient & mean & std & 95\% interval & & mean & std & 95\% interval\tabularnewline
\hline 
$y_{0}^{*}$ & $1.12$ & $0.04$ & $[1.04,1.20]$ & & $0.62$ & $0.11$ & $[0.42,0.85]$\tabularnewline
$y_{0}^{*2}$ & - & - & - & & $-0.16$ & $0.03$ & $[-0.22,-0.11]$\tabularnewline
Earnings in past year& $0.36$ & $0.12$ & $[0.12,0.59]$ & & $-0.37$ & $0.14$ & $[-0.64,-0.09]$\tabularnewline
Age & $0.18$ & $0.17$ & $[-0.15,0.52]$ & & $-0.28$ & $0.20$ & $[-0.70,0.13]$\tabularnewline
Married & $0.12$ & $0.08$ & $[-0.02,0.26]$ & & $0.06$ & $0.08$ & $[-0.10,0.23]$\tabularnewline
HS or GED & $0.11$ & $0.07$ & $[-0.03,0.25]$ & & $0.32$ & $0.08$ & $[0.16,0.48]$\tabularnewline
Black or Hispanic & $-0.16$ & $0.07$ & $[-0.29,-0.03]$ & & $-0.05$ & $0.06$ & $[-0.18,0.08]$\tabularnewline
{\scriptsize Work less than 13 weeks in past year} & $0.03$ & $0.08$ & $[-0.12,0.20]$ & & $-0.14$ & $0.09$ & $[-0.33,0.02]$\tabularnewline
\hline 
\end{tabular}
}
\label{table:jtpa_coef_women}
\end{table}

The latent variable $y_{0}^{*}$ has a positive effect on the conditional mean $\mu_{1}(y_{0}^{*},x)$, which implies that $y_{1}^{*}$ increases linearly according to $y_{0}^{*}$. However, in terms of the potential outcomes, $y_{0}$ and $y_{1}$, the linear relationship changes due to censoring at $y_{0}^{*}=y_{1}^{*}=0$. Figure~\ref{fig:jtpa_hte_women} shows the estimated HTE along with the estimated density of $p(y_{0}|y_{0}>0)$ at the bottom\footnote{See the supplementary material for the detailed derivation of the HTE shown here}. The red line and the blue domain are the posterior mean and the 95\% credible band, respectively; it makes sense that the credible band becomes larger as the density gets smaller. In addition, the HTE is discontinuous at $y_{0}=0$, because the integrated value of $\E{y_{1}^{*}|y_{0}^{*},x}$ over $y_{0}^{*}<0$ is shown at $y_{0}=0$; thus, its posterior mean and 95\% credible interval are shown as a point and interval on line $y_{0}=0$.
\begin{figure}[htbp]
	\centering
	\includegraphics[width=0.6\linewidth]{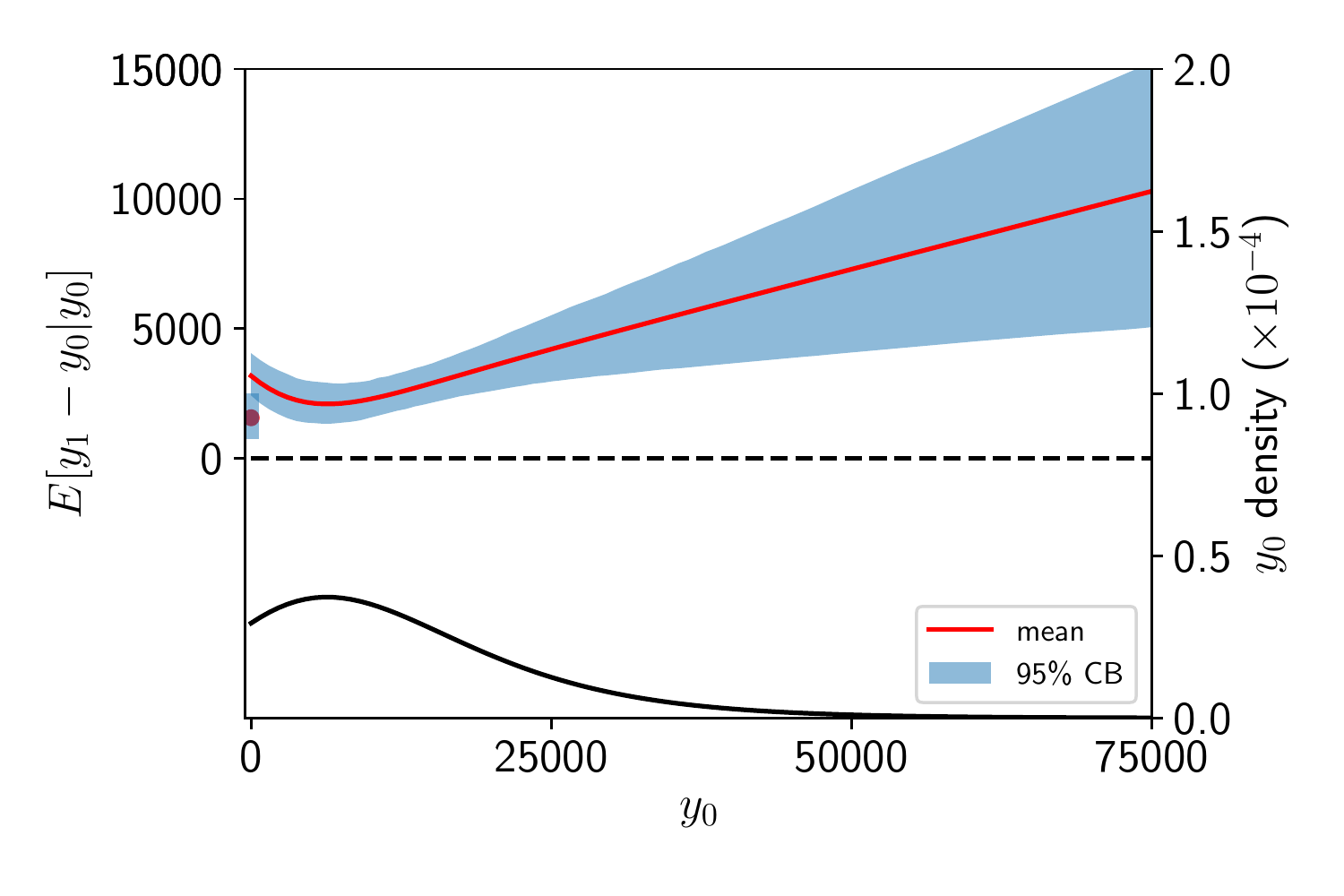}
	\caption{Estimated Heterogeneous Treatment Effects: the posterior mean and 95\% credible band}
	\label{fig:jtpa_hte_women} 
\end{figure}

The result is consistent with Abadie et al. (2002) in that, for women, the JTPA programme worked most effectively for the low-earning participants in terms of the proportion of the treatment effect to the untreated outcome. In addition, although $\mu_{1}(y_{0}^{*},x)$ is a linear function of $y_{0}^*$ (due to the nature of Tobit-type modelling, which is truncated at zero for both $y_{1}$ and $y_{0}$: see supplementary file for details), our analysis also suggests that the programme effect changes like a quadratic function over the untreated outcome; that is, the results indicate that the programme has an impact on those who are unemployed or with high earnings, but not so much those with moderate earnings (see also Table \ref{table:hte_quantile} which shows the HTE as a proportion of $y_{0}$ evaluated at several quantiles).
\begin{table}
\caption{Intervals and the HTE}
\centering
\begin{tabular}{cccccc}
\hline 
Interval & below 0 & 0 to 25 \% & 25 \% to 50 \% & 50 \% to 75 \% & above 75 \% \tabularnewline
\hline
$\E{y_{0}}$ & 0 & 548 & 4685 & 13522 & 29946 \tabularnewline
$\E{y_{1}-y_{0}}$ & 1568 & 2986 & 2260 & 2726 & 4817 \tabularnewline
$\E{y_{1}-y_{0}}/\E{y_{0}}$ (\%) & - & 544 & 48 & 20 & 16 \tabularnewline
\hline 
\end{tabular}
\label{table:hte_quantile}
\end{table}

The estimated causal effects are summarized in Table \ref{table:jtpa_causal_effects}. Note that ATE is estimated based on the model, while the LATE estimate is an IV (Wald) estimator. In addition, while the standard deviation and 95\% credible band for ATE are obtained from the posterior, for LATE and the simple difference by training status they are calculated from the asymptotic distributions. The estimated ATE is larger than the estimated LATE, which indicates that by promoting or compelling participation in the training under the treatment, the benefits of the programme could be improved.
\begin{table}[htbp]
\caption{Estimated Causal Effects}
\centering
\begin{tabular}{cccc}
\hline 
 & mean & s.d. & 95\% interval\tabularnewline
\hline 
ATE & 2701 & 238 & {[}2238, 3167{]}\tabularnewline
LATE & 1916 & 38 & {[}1843, 1990{]}\tabularnewline
Mean Diff. & 1260 & 5 & {[}1249, 1270{]}\tabularnewline
\hline 
\end{tabular}
\label{table:jtpa_causal_effects}
\end{table}

\section{Discussion and Conclusion}

We provide the sufficient condition for the identification of HTE with information on the marginal distribution of the untreated outcome under the nonignorable missing assumption, the same result for ATE and ATU, and a weaker condition for ATT. 
We propose a Bayesian and quasi-Bayesian estimation method for HTE and examine its properties through a simple simulation study, showing the availability of estimating $\E{y_{1}|y_{0}}$ even though neither of the pair $(y_{1},y_{0})$ is observed. The proposed method is also applied to the analysis of the JTPA study, providing the consistent result with the previous literature and also several new insights.

The completeness condition, as used in the proof of Theorem 2, has received considerable research attention since it was first presented in Newey and Powell~(2003). However, although it is applied widely, Canay et al.~(2013) show that the completeness condition cannot be tested using observed data. This implies that ``for every complete distribution, there exists an incomplete distribution which is arbitrarily close to it" (Freyberger, 2017, p.1629). Canay et al. (2013) argue that their result does not suggest avoiding the use of the completeness condition but rather justifies it through alternative arguments. In contrast, several studies provide sufficient conditions as alternatives for the completeness condition that may be testable (Newey and Powell, 2003; D'Haultfoeuille, 2011; Hu and Shiu, 2018). Our assumption that the function $g$ is the logistic function relates to the latter approach. Newey and Powell~(2003) provide the sufficient condition that a certain conditional distribution, corresponding to $p(y_{0}|x,z=1)$ in our model, is of the exponential family; however, specifying the extended propensity score as the logistic regression may be somewhat weaker.

Our results relate to the partial identification literature on statistical data fusion or statistical data combination (see e.g. Ridder and Moffitt, 2007). Fan et al.~(2014) consider a situation where the outcome variables and covariates are separately observed and derive partial identification results. Although they assume strong ignorability (i.e. that the missing mechanism is ignorable), no sample is observed as a set of the outcome and covariates. In contrast, we assume that the outcome and covariates are observed simultaneously in each group; hence, in this sense, their setting is more general than ours. However, we consider the nonignorable missing mechanism (weak ignorability) and provide point identification results using auxiliary information. Therefore, which approach is the more useful may depend on the situation.
In addition, as a practical application to data fusion, our results indicate that if we know moments of variables of both data sets to be combined, then the analysis can be done without several strong assumptions often assumed in the literature (Takahata and Hoshino, 2018).

While the theorems here deal with an semiparametric identification, the proposed estimation methods consider parametric models. Nonparametric or semiparametric Bayesian estimation methods (see e.g. Dunson et al., 2007; Ghosal and van der Vaart, 2017; Hoshino, 2013) will be applied to modeling $p(y_{1}|y_{0},x)$ and $p(y_{0}|x)$. In addition, we have developed some estimation methods for nonparametric models that may be applied toward practical applications (see, Takahata and Hoshino, 2018).
However, we believe this paper is the first milestone research on this issue, because no previous study has found sufficient conditions for identification of $p(y_{1}|y_{0})$ which are weak enough to be assumed in many empirical applications.

As we mentioned, we can estimate HTE, ATE, ATT, and ATU for a randomized controlled trial with one-sided non-compliance, that is, where in the control group all the participants comply with the control condition while in the treatment group not all the participants comply with their treatment or individuals are allowed to choose their treatment. The RCT design with one-sided non-compliance is very easy to implement, because researchers do not enforce complete compliance, which will assure ecological validity (Shadish, Cook and Campbell, 2002) while the proposed estimation methods can assure internal validity.

\end{document}